\DeclareMathOperator*{\argmin}{arg\,min}
\newcommand{\Ac}{\mathcal{A}}
\newcommand{\Bc}{\mathcal{B}}
\newcommand{\Ec}{\mathcal{E}}
\newcommand{\Nc}{\mathcal{N}}
\newcommand{\Sc}{\mathcal{S}}
\newcommand{\Uc}{\mathcal{U}}
\newcommand{\xv}{{\bf x}}
\newcommand{\yv}{{\bf y}}
\newcommand{\xh}{{\hat{x}}}
\newcommand{\eh}{{\hat{e}}}
\newcommand{\et}{{\tilde{e}}}
\newcommand{\xt}{{\tilde{x}}}
\def\d{\delta}
\def\e{\epsilon}
\DeclareMathOperator\E{E}
\let\P\relax
\DeclareMathOperator\P{P}
\newcommand{\ind}{\mathbf{I}}
\newcommand\ie{i.e.,\;}
\newcommand{\Bern}{\mathrm{Bern}}
\newtheorem{definition}{Definition}
\newtheorem{theorem}{Theorem}
\newtheorem{lemma}{Lemma}
\newtheorem{corollary}{Corollary}
\newtheorem{proposition}{Proposition}
\begin{document}

\title{\LARGE \bf Minimum Complexity Pursuit}

\author{Shirin Jalali and Arian Maleki
\thanks{S. Jalali is a postdoctoral scholar at the Center for Mathematics of Information, California Institute of Technology, Pasadena, CA,
        {\tt\small shirin@caltech.edu}}%
\thanks{A. Maleki is a postdoctoral scholar at Digital Signal Processing group, Rice University, Houston, TX,
        {\tt\small arian.maleki@rice.edu}}%
}

\maketitle

\newcommand{\p}{\mathds{P}}
\newcommand{\mb}{\mathbf{m}}   
\newcommand{\bb}{\mathbf{b}}

\begin{abstract}
The fast growing field of compressed sensing is founded on the fact that if a signal is ÔsimpleÕ and has some `structure', then it can be reconstructed accurately with far fewer samples than its ambient dimension. Many different plausible structures have been explored in this field, ranging from sparsity to low-rankness and to finite rate of innovation. However, there are important abstract questions that are yet to be answered. For instance, what are the general abstract meanings of ÔstructureÕ and ÔsimplicityÕ? Does there exist universal algorithms for recovering such simple structured objects from fewer samples than their ambient dimension? In this paper, we aim to address these two questions. Using algorithmic information theory tools such as Kolmogorov complexity, we provide a unified method of describing ÔsimplicityÕ and ÔstructureÕ. We then explore the performance of an algorithm motivated by OcamÕs Razor (called MCP for minimum complexity pursuit) and show that it requires $O(k\log n)$ number of samples to recover a signal, where $k$ and $n$ represent its complexity and ambient dimension, respectively. Finally, we discuss more general classes of signals and provide guarantees on the performance of MCP.
\end{abstract}


\section{Introduction}\label{sec:intro}
Compressed sensing (CS) refers to a body of techniques that undersample high-dimensional signals,  and yet recover them accurately by exploiting their intrinsic `structure' \cite{Donoho1, CaRoTa06}. This permits more efficient sensing systems that are proved to be valuable in many applications including magnetic 
resonance imaging (MRI) \cite{LuDoSaPa08} and radar \cite{HeSt09}, to name a few. Some of the `structures' that have been considered in the literature are as follows. 
\begin{itemize}
\item[i.]  Sparsity:  A vector $x \in \mathds{R}^n$
is called $k$-sparse if and only if $\|x\|_0 \triangleq \sum_{i=1}^n \ind_{\{x_i \neq 0\}} \leq k$. Roughly speaking, 
according to compressed sensing a $k$-sparse signal $x$ can be recovered from $d=O(k\log n)$ random linear
measurements $y = Ax$.

\item [ii.] Low rankness: If $X \in \mathds{R}^{m \times n}$ is a low rank matrix with
${\rm rank}(X) \leq k$, then $d = O(r(m+n) \log(mn))$ random linear measurements are sufficient for recovering $X$ from its measurements 
accurately with high probability \cite{ReFaPa10}.
\item[iii.] Model-based compressed sensing: \cite{RichModelbasedCS} considers more structured signal models by assuming 
that from ${n \choose k}$ subspaces of $k$-sparse signals only $m_k$ of them may occur. It is then proved that $O(\log(m_K))$ random linear measurements are sufficient for the accurate recovery of such signals. This class is a superset of some of the other structures introduced in the literature such as the class of block-sparse signals  \cite{eldar2010block, stojnic2009reconstruction, stojnic2009block, MaCeWi05}.

\item[iv.] Rate of innovation: \cite{VeMaBl02} defines the rate of innovation of a signal as its ``degrees of freedom''. Several important classes of functions such as the piecewise polynomial functions and sparse signals have clearly finite rate innovation. \cite{VeMaBl02}  suggests sampling schemes for several classes that recover the signal from $O(k)$ number of measurements, where $k$ is the rate of innovation.
 \end{itemize}
 
The above results seem to provide pieces of a bigger picture. Recently,  \cite{ChRePaWi10} introduced the  class of simple functions and atomic norm as a framework that unifies some of the above observations and extends them  to some other signal classes. However, there is still an interesting conceptual question that needs to be addressed, \ie what is the 
abstract meaning of `structure' that allows fewer measurements than the ambient dimension of the signal? Given a simple signal, 
which scheme recovers the signal from an undersampled random linear set of measurements?\\

In the context of algorithmic information theory, Solomonoff \cite{Solomonoff} and Kolmogorov \cite{KolmogorovC} 
suggested a universal notion of complexity for binary sequences, known as the Kolmogorov complexity. 
Given a binary sequence $x$, its Kolmogorov complexity $K(x)$ is defined as the length of the shortest computer program that prints $x$. 
In this paper, we extend the concept of Kolmogorov complexity to the real signals. Such extensions are straightforward and have been explored
before \cite{Staiger2002455}. Based on this notion of complexity, called
Kolmogorov complexity of real signals, we show that OccamÕs razor\cite{ocam}, i.e., finding the `simplest' solution of the linear equations, 
correctly recovers the signal with much fewer measurements than the ambient dimension of the signal. 
Roughly speaking, we prove that the number of linear measurements required for recovering the correct solution is proportional to the complexity
rather than the ambient dimension of the signal. We postpone the accurate exposition of our results to Section \ref{sec:contrib}. 
We will further discuss the issue of model mismatch in the signal classes and will prove
that the approach motivated by OccamÕs razor is stable with respect to such non-idealities in the system.\\

Here is the organization of our paper. Section \ref{sec:def} defines  the notation used throughout the paper. Senction \ref{sec:kolm} defines Kolmogorov complexity of a real-valued signal. Section \ref{sec:contrib} outlines our contribution. Section \ref{sec:examp} calculates the Kolmogorov complexity of several
classes that are popular in compressed sensing and clarifies the statements of our theorems on these classes. Section \ref{sec:related}
compares our work with other results in the literature. Sections \ref{sec:proof1} and \ref{sec:proof2} are devoted to the proofs of
our main theorems.


\section{Definitions}\label{sec:def}
Calligraphic letters such as $\Ac$ and $\Bc$ denote sets. For a set $\Ac$, $|\Ac|$ and $\Ac^c$ denote its size and its complement, respectively. 
For a sample space $\Omega$ and  event set $\Ac\subseteq \Omega$, $\ind_{\Ac}$ denotes the indicator function of the event $\Ac$.

Let $\{0,1\}^*$ denote the set of all finite-length binary sequences, \ie $\{0,1\}^*\triangleq\cup_{n\geq 1}\{0,1\}^n$. For a vector $x\in\mathds{R}^n$, the $\ell_p$ norm of $x$ is defined as $\|x\|_p\triangleq(\sum_{i=1}^n|x_i|^p)^{1/p}$. The $\ell_{\infty}$ norm of $x$ is denoted by $\|x\|_{\infty}\triangleq\max_{i}|x_i|$.

For a real number $x\in[0,1]$, let $[x]_m$ denote the $m$-bit approximation of $x$ that results  from taking the first $m$ bits in the binary expansion of $x$. In other words, if   $x=\sum_{i=1}^{\infty}2^{-i}(x)_i$, where $(x)_i\in\{0,1\}$ denotes the $i^{\rm th}$ bit in the binary expansion of $x$,  then
\begin{align}
[x]_m\triangleq\sum_{i=1}^{m}2^{-i}x_i.
\end{align}
Similarly, for a vector $x^n\in[0,1]^n$, define
\begin{align}
[x^n]_m\triangleq ([x_1]_m,\ldots,[x_n]_m).
\end{align}
For an integer $n\in\mathds{N}$, let  
\[
\log^*n \triangleq \lceil \log_2 n\rceil + 2\log_2 \max(\lceil \log_2 n\rceil ,1).
\]


\section{Kolmogorov complexity}\label{sec:kolm}
The Kolmogorov complexity of a finite-length sequence $x$ with respect to a \emph{universal computer} $\Uc$ is defined as the minimum length over all programs that print $x$ and halt.\footnote{Refer to Chapter 14 of \cite{cover} for the exact definition of a universal computer, and more details on the definition of the Kolmogorov complexity.}  For a universal computer $\Uc$ and any computer $\Ac$, there exists a constant $c_{\Ac}$ such that $K_{\Uc}(x)\leq K_{\Ac}(x)+c_{\Ac}$, for all strings $x\in\{0,1\}^*$ \cite{cover}. Hence, as suggested in \cite{cover}, we drop the subscript $\Uc$, and let $K(x)$ denote the Kolmogorov complexity of the binary string $x$.

Similarly, the Kolmogorov complexity of an integer $n\in\mathds{N}$, $K(n)$, is defined as the Kolmogorov complexity of its binary representation. It can be proved that 
\[
K(n)\leq \log^* n+c,
\]
where $c$ is a constant independent of $n$.

\noindent For $x=(x_1,x_2,\ldots,x_n)\in [0,1]^n$, define the Kolmogorov complexity of $x$ at  resolution $m$ as
\begin{align}
K^{[\cdot]_m}(x)  = K([x_1]_m,[x_2]_m,\ldots,[x_n]_m).
\end{align}

\begin{lemma} For $(x_1, x_2, \ldots, x_n) \in [0,1]^n$,
\[
\limsup_{m \rightarrow \infty} \frac{K^{[\cdot]_m}(x_1,x_2, \ldots, x_n)}{m} \leq n.
\]
\end{lemma}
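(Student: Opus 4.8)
The plan is to prove the inequality by exhibiting an explicit (and deliberately wasteful) program that prints the quantized vector $([x_1]_m,\ldots,[x_n]_m)$, and then observing that the length of this program is $nm$ plus an overhead that is negligible compared with $m$ as $m\to\infty$. The underlying point is that the quantized vector carries at most $nm$ bits of genuine information, so the crude ``just store the raw bits'' strategy is already sufficient to achieve the claimed asymptotic rate of $n$.

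First I would construct the program. For each coordinate, the truncation $[x_i]_m$ is by definition determined by its first $m$ binary digits $(x_i)_1,\ldots,(x_i)_m$. Concatenating these across $i=1,\ldots,n$ produces a binary string $b$ of length exactly $nm$. A decoder that is told the value of $m$ (and knows the fixed dimension $n$) can read $b$, split it into $n$ consecutive blocks of length $m$, and reconstruct each $[x_i]_m$; hence printing $([x_1]_m,\ldots,[x_n]_m)$ reduces to supplying $m$ together with the payload $b$.

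Next I would bound the program length. The integer $m$ can be encoded in a self-delimiting fashion using at most $\log^* m + c_1$ bits, after which the decoder knows precisely how many data bits remain, so no separator between the description of $m$ and the $nm$ bits of $b$ is needed. Folding the fixed decoder together with the (constant, since $n$ is fixed) description of $n$ into a single constant $c=c(n)$, I obtain
\[
K^{[\cdot]_m}(x_1,\ldots,x_n)\;\leq\; nm+\log^* m+c .
\]
Dividing by $m$ and letting $m\to\infty$ yields
\[
\frac{K^{[\cdot]_m}(x_1,\ldots,x_n)}{m}\;\leq\; n+\frac{\log^* m+c}{m}\;\longrightarrow\; n ,
\]
because $\log^* m = o(m)$, which gives the desired bound on the $\limsup$.

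The only genuine subtlety — and the single step worth handling with care — is the self-delimiting encoding of $m$: the concatenation of the codeword for $m$ with the raw payload $b$ must be parsable by the universal computer without ambiguity, so that it constitutes a legitimate halting program. This is exactly what the $\log^*$ overhead pays for, and that overhead is $o(m)$, hence it does not affect the limit. Everything else is a routine counting argument, and I would not expect any real difficulty there.
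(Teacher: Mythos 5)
Your proof is correct: the bound $K^{[\cdot]_m}(x_1,\ldots,x_n)\leq nm+\log^* m+c(n)$ via a self-delimiting encoding of $m$ followed by the $nm$ raw bits is exactly the routine ``store the quantized bits'' argument, and the $o(m)$ overhead vanishes in the limit. The paper itself omits the proof as ``very simple,'' and what you wrote is precisely the argument the authors had in mind, with the self-delimiting subtlety handled properly.
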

The proof is very simple and is skipped.
\begin{definition}
The signal $x=(x_1, x_2, \ldots, x_n)$ is called incompressible if and only if 
\[
\lim_{m \rightarrow \infty} \frac{K^{[\cdot]_m}(x_1,x_2, \ldots, x_n)}{m} = n.
\]
\end{definition}
\begin{proposition}
Let $\{X_i\}_{i=1}^n \overset{iid}{\sim} U[0,1]$. Then,
\[
\frac{1}{m}K^{[\cdot]_m}(X_1, X_2, \ldots X_n) \rightarrow n
\]
in probability.
\end{proposition}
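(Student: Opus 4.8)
The plan is to establish matching bounds from above and below whose combination yields convergence in probability. The upper bound is essentially deterministic: for every realization $x\in[0,1]^n$, writing down the $nm$ bits of $([x_1]_m,\ldots,[x_n]_m)$ together with self-delimiting descriptions of $n$ and $m$ shows that $K^{[\cdot]_m}(x)\le nm+\log^* n+\log^* m+O(1)$, so that $\frac{1}{m}K^{[\cdot]_m}(x)\le n+o(1)$ \emph{uniformly} in $x$. This is exactly the (skipped) content of the Lemma, and it gives $\P\big(\tfrac1m K^{[\cdot]_m}(X)>n+\e\big)=0$ for all large $m$. All the real work therefore lies in the lower bound $\P\big(\tfrac1m K^{[\cdot]_m}(X)<n-\e\big)\to 0$.

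For the lower bound I would exploit the fact that the binary digits $(X_i)_j$, $1\le i\le n$, $1\le j\le m$, of independent $U[0,1]$ variables are i.i.d.\ $\Bern(1/2)$. Hence the concatenation $s\triangleq (X_1)_1\cdots(X_1)_m(X_2)_1\cdots(X_n)_m$ is uniformly distributed over $\{0,1\}^{nm}$. First I would show that $K^{[\cdot]_m}(X)$ and $K(s)$ agree up to an amount that vanishes after normalization: given $s$ together with $m$ one splits $s$ into $n$ blocks of length $m$ and reads off the tuple (and $n$ is recovered from the tuple itself), which yields $K^{[\cdot]_m}(X)\ge K(s)-\log^* m-O(1)$. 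Next I would invoke the standard incompressibility count: since distinct strings require distinct programs, at most $2^{nm-\lambda+1}$ strings of length $nm$ admit a program shorter than $nm-\lambda$, so that
\[
\P\big(K(s)\le nm-\lambda\big)\le \frac{2^{nm-\lambda+1}}{2^{nm}}=2^{1-\lambda}.
\]

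Choosing $\lambda=\tfrac{\e}{2}m$ makes the right-hand side $2^{1-\e m/2}\to 0$. On the complementary event one has $K^{[\cdot]_m}(X)\ge K(s)-\log^* m-O(1)> nm-\tfrac{\e}{2}m-\log^* m-O(1)$, and for $m$ large enough the deterministic overhead $\log^* m+O(1)$ is below $\tfrac{\e}{2}m$, whence $\frac1m K^{[\cdot]_m}(X)>n-\e$. Combining this with the uniform upper bound gives convergence in probability. The main obstacle I anticipate is the bookkeeping controlling the gap between $K$ of the real tuple and $K$ of its bit-string encoding: one must verify that knowing $m$ (and recovering $n$) costs only $O(\log^* m)$ extra bits, so that after dividing by $m$ the discrepancy disappears. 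Everything else reduces to the classical counting estimate for random strings together with the elementary observation that the digits of a uniform variable are fair coin flips.
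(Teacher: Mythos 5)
Your proof is correct, but it takes a more self-contained route than the paper. The paper's proof also starts from the observation that the bits $(X_i)_j$ are i.i.d.\ $\Bern(1/2)$, but it then simply cites Theorem 14.5.3 of Cover and Thomas, which gives $K(\{(X_i)_1,\ldots,(X_i)_m\}_{i=1}^n \mid mn)/(mn) \to 1$ in probability, and finishes by converting conditional to unconditional complexity via $K(x \mid mn) \le K(x) \le K(x \mid mn) + \log^*(mn) + c$. You instead re-derive the probabilistic content of that citation from scratch: your counting bound $\P\left(K(s) \le nm - \lambda\right) \le 2^{1-\lambda}$ with $\lambda = \epsilon m/2$ is exactly the incompressibility argument underlying the cited theorem, and your uniform upper bound is the paper's Lemma 1 (whose proof the paper skips). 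What your approach buys is a fully elementary, self-contained argument with explicit constants; what the paper's buys is brevity. You are in fact more careful than the paper on one point: the translation between $K$ of the real tuple $([X_1]_m,\ldots,[X_n]_m)$ and $K$ of the concatenated $nm$-bit string $s$, which the paper silently elides by identifying the two objects. One small slip there, however: to obtain the inequality you need, namely $K^{[\cdot]_m}(X) \ge K(s) - \log^* m - O(1)$, the relevant reduction goes from the tuple to the string (given a description of the tuple together with $m$, pad each entry to exactly $m$ binary digits and concatenate to print $s$); the reduction you describe, splitting $s$ into blocks to recover the tuple, proves the reverse inequality $K^{[\cdot]_m}(X) \le K(s) + \log^* m + O(1)$. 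Both inequalities are true and both cost only $O(\log^* m)$, so nothing in your argument breaks, but the justification should be flipped.
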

\begin{proof}
If $X_i=\sum_{j=1}^{\infty} (X_i)_j2^{-j}$, where $(X_i)_j\in\{0,1\}$, then $\{(X_i)_j\}_{j=1}^{\infty} \overset{iid}{\sim} \Bern(1/2)$. Theorem 14.5.3 in \cite{cover} states that  the normalized Kolmogorov's complexity of $([X_1]_m,\ldots,[X_n]_m)={\{((X_i)_1,(X_i)_2,\ldots,(X_i)_m)\}_{i=1}^{n}}$, \ie
\begin{align}
{K(\{(X_i)_1,(X_i)_2,\ldots,(X_i)_m\}_{i=1}^n|mn) \over mn}\to 1,\label{eq:K1}
\end{align}  
in probability.  On the other hand,
\begin{align}
&K(\{(X_i)_1,(X_i)_2,\ldots,(X_i)_m\}_{i=1}^n|mn)\nonumber\\
&\leq K(\{(X_i)_1,(X_i)_2,\ldots,(X_i)_m\}_{i=1}^n)\nonumber\\
& \leq  K(\{(X_i)_1,(X_i)_2,\ldots,(X_i)_m\}_{i=1}^n|mn) +\log^*(mn) + c,\label{eq:K2}
\end{align}
where $c$ is a constant \cite{cover}. Hence, combing \eqref{eq:K1} and \eqref{eq:K2} proves the desired result.
\end{proof}


\section{Our contribution}\label{sec:contrib}
Consider the problem of reconstructing a vector $x_o \in \mathds{R}^n$ from $d$ random linear measurements $y = Ax$ with $d <n$. We say a recovery algorithm
is successful if as $n$ grows the $\ell_2$-error between $x_o$ and its reconstruction $\xh_o$ goes to zero, \ie we want
\[
\P\left (\|x_{o}^n-\xh_{o}^n\|_2^2 >\e \right)\to 0,
\]
for any $\e>0$. Assuming that the signal is `structured' in the
sense that will be clarified later, we follow Ocam's Razor and seek the simplest solution of $y = Ax$, i.e., 
\begin{eqnarray}
&&\arg \min\quad K^{[\cdot]_m}(x_1,\ldots,x_n)\nonumber \\
&&{\rm s.t.}\quad \ \  \;\;\;\; Ax^n = y_o^n.\label{eq:alg}
\end{eqnarray}
We call this algorithm minimum complexity pursuit or MCP. The choice of $m$ will be clarified later as well.  
Suppose that  $A\in\mathds{R}^{d\times n}$, where $A_{ij}$ are iid $\Nc(0,1/d)$, and assume that $y_o^n=Ax_{o}^n$. 
Let $\xh_o^n=\xh_o^n(y_o^n,A)$ denote the output of \eqref{eq:alg} to the inputs $y_o^n$ and $A$.

\begin{theorem}\label{thm:1}
Assume that $x_o=(x_{o,1},x_{o,2},\ldots)\in[0,1]^{\infty}$ is  such that
\begin{align}
\limsup_{n\to\infty} {K^{[\cdot]_{m}}(x_{o,1},x_{o,2},\ldots,x_{o,n}) \over m} \leq \kappa,\label{eq:limited_kol}
\end{align} 
where $m=m_n=\lceil \log n\rceil$. Let $d=d_n=\lceil\kappa \log n \rceil$.
Then, for any $\e>0$
\begin{align}
\P\left(\|x_{o}^n-\xh_{o}^n\|_2^2>\e\right)\to 0,
\end{align}
as $n$ grows without bound.

\end{theorem}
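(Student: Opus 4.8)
The plan is to convert the recovery guarantee into a counting argument over low‑complexity quantized candidates, coupled with a Gaussian concentration estimate for a fixed vector passed through $A$. Write $w_o=[x_o^n]_m$ for the quantization of the true signal, and for a feasible $x^n$ (meaning $Ax^n=y_o^n$) write $w=[x^n]_m$; since the search is over $[0,1]^n$, where $[\cdot]_m$ is defined, every such $w$ is a point of the grid of spacing $2^{-m}$. Because $x_o^n$ is itself feasible, the MCP output $\xh_o^n$ obeys $K(w^*)=K^{[\cdot]_m}(\xh_o^n)\le K^{[\cdot]_m}(x_o^n)\le \kappa' m$ for any fixed $\kappa'>\kappa$ and all large $n$, by \eqref{eq:limited_kol}, where $w^*=[\xh_o^n]_m$. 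Since $m=\lceil\log n\rceil$ gives $2^{-m}\le 1/n$, each rounding error satisfies $\|x^n-[x^n]_m\|_2\le\sqrt n\,2^{-m}\le 1/\sqrt n$, so $\|\xh_o^n-x_o^n\|_2$ and $\|w^*-w_o\|_2$ differ by at most $2/\sqrt n$. Hence, if the algorithm fails, i.e. $\|\xh_o^n-x_o^n\|_2^2>\e$, then for large $n$ the point $w^*$ is a \emph{low-complexity grid point far from $w_o$}: $\|w^*-w_o\|_2>\sqrt\e-2/\sqrt n>\tfrac12\sqrt\e=:\tau$.

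Next I would control the measurement constraint. As $A\xh_o^n=Ax_o^n$, we have $A(\xh_o^n-x_o^n)=0$, and subtracting the two (random, $A$-dependent) rounding residuals gives $A(w^*-w_o)=-A\big((\xh_o^n-w^*)-(x_o^n-w_o)\big)$, a vector whose preimage has $\ell_2$ norm at most $2/\sqrt n$. On the event $G=\{\sigma_{\max}(A)\le 2\sqrt{n/d}\}$ — which holds with probability tending to $1$ by standard Gaussian spectral-norm concentration — this forces $\|A(w^*-w_o)\|_2\le 2\sqrt{n/d}\cdot 2/\sqrt n=4/\sqrt d$, i.e. $\|A(w^*-w_o)\|_2^2\le 16/d$. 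Thus on $G$ a failure entails the existence of a grid point $w$ with $K(w)\le\kappa' m$, $\|w-w_o\|_2>\tau$, and $\|A(w-w_o)\|_2^2\le 16/d$.

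Now I would union bound over the finite, deterministic family of admissible grid points. There are fewer than $2^{\kappa' m+1}$ binary strings of length $\le\kappa' m$, and each program prints at most one object, so there are at most $2^{\kappa' m+1}=O(n^{\kappa'})$ grid points of complexity $\le\kappa' m$. For a fixed such $w$ the vector $u=w-w_o$ is deterministic and independent of $A$, and $Au$ has i.i.d.\ $\Nc(0,\|u\|_2^2/d)$ entries, so $\|Au\|_2^2$ is distributed as $(\|u\|_2^2/d)\,\chi^2_d$. Using $\|u\|_2>\tau$,
\[
\P\!\left(\|Au\|_2^2\le 16/d\right)=\P\!\left(\chi^2_d\le 16/\|u\|_2^2\right)\le\P\!\left(\chi^2_d\le 16/\tau^2\right)\le (eC/d)^{d/2},
\]
where $C=16/\tau^2=64/\e$ is a constant and the last step is the standard lower-tail Chernoff bound for $\chi^2_d$ evaluated far below its mean $d$. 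The crucial point is that the threshold $C$ is constant while $d\to\infty$, so this probability decays faster than any power of $n$; multiplying by the $O(n^{\kappa'})$ union-bound factor still tends to $0$. Together with $\P(G^c)\to 0$ this yields $\P(\|\xh_o^n-x_o^n\|_2^2>\e)\to 0$.

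The main obstacle — and the reason the parameters are chosen as they are — is the balance in this final union bound. Taking $m=\lceil\log n\rceil$ is essentially the smallest resolution for which the rounding error $\sqrt n\,2^{-m}$ vanishes; any coarser $m$ leaves a non-negligible quantization gap, while any much finer $m$ inflates the candidate count $2^{\kappa' m}$ beyond polynomial in $n$ and breaks the bound. Given the count $\approx n^{\kappa'}$, one needs the per-candidate failure probability to be $o(n^{-\kappa'})$; because it is a $\chi^2_d$ lower tail at a constant threshold it behaves like $(eC/d)^{d/2}$, and $d=\lceil\kappa\log n\rceil\to\infty$ makes this super-polynomially small with room to spare (an extra $\log\log n$ factor appears in the exponent). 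Verifying the spectral-norm tail defining $G$ and the $\chi^2_d$ lower-tail estimate are the only genuinely quantitative steps; everything else is bookkeeping on the quantization.
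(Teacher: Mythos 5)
Your proof is correct, and it shares the paper's skeleton --- quantize at resolution $m=\lceil\log n\rceil$, control the rounding residuals through $\sigma_{\max}(A)$, bound the complexity of the reconstruction by feasibility of $x_o^n$, count low-complexity candidates, and union-bound against a Gaussian lower-tail estimate --- but the two technical pivots are genuinely different. (i) The paper analyzes the difference vector $[x_o^n]_m-[\xh_o^n]_m$, bounding its complexity by $2(\kappa+\d)m+C$ via subadditivity of Kolmogorov complexity, and requires the uniform event $\Ec_1^{(n)}$ that $\|Ay\|_2\ge\tau\|y\|_2$ for \emph{all} such low-complexity $y$; you instead anchor at the fixed grid point $w_o=[x_o^n]_m$ and union only over candidate reconstructions $w$ with $K(w)\le\kappa'm$, which avoids the subadditivity step and halves the exponent in the candidate count. (ii) More consequentially, the paper's per-vector estimate (its Lemma~\ref{lemma:chi}, applied at the multiplicative threshold $\tau\|y\|_2$) is only polynomially small, ${\rm e}^{\frac d2(1-\tau^2+2\log\tau)}$, so its union bound closes only because $\tau$ is tuned to $0.04$ to make that polynomial beat the count $2^{2(\kappa+\d)m+C}$; you exploit instead that the exact constraint $Ax_o^n=A\xh_o^n$ forces the \emph{absolute, vanishing} bound $\|A(w^*-w_o)\|_2\le 4/\sqrt d$, so each fixed candidate with $\|w-w_o\|_2>\tau$ must be mapped into a ball of radius $o(1)$, and the resulting $\chi^2_d$ lower tail at a constant argument is $(eC/d)^{d/2}$ --- super-polynomially small in $n$ --- so the union bound closes with room to spare and no tuning of constants. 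What the paper's multiplicative formulation buys in exchange is reusability: the event $\Ec_1^{(n)}$ is invoked verbatim in the proof of Theorem~\ref{thm:2}, where the measurement constraint is the inequality $\|Ax^n-y_o^n\|_2\le\sigma_{\max}(A)\epsilon_n$ rather than an equality; your absolute-threshold argument also adapts to that setting (the threshold stays vanishing when $\epsilon_n=o(d_n/n)$), but it would have to be redone, whereas the paper's event transfers unchanged.
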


This theorem indicates that when the Kolmogorov complexity of the signal is less than $\kappa$, then
 $O(\kappa \log n)$ linear measurements are sufficient for the successful recovery. Also, it provides an
 evidence for the success of Ocam's Razor.

Although Theorem \ref{thm:1} is an asymptotic theorem, its proof provides information on the 
performance of MCP on finite length sequences as 
well.

\begin{corollary}
Assume that $x_o=(x_{o,1},x_{o,2},\ldots, x_{o,n})\in[0,1]^{n}$ is  such that
\begin{align*}
{K^{[\cdot]_{m}}(x_{o,1},x_{o,2},\ldots,x_{o,n}) \over m} \leq \kappa,  \ \ \forall \; m.  
\end{align*} 
Let $m=m_n= \lceil \alpha \log n\rceil$ and $d=d_n=\lceil 2 \alpha \kappa \log n \rceil$. Then, with probability $1-{n^{-\alpha \kappa}}$
\begin{align*}
\|x_{o}^n-\xh_{o}^n\|_2 \leq \frac{10 n^{1/2- \alpha}}{\sqrt{\kappa} \log n}.
\end{align*}

\end{corollary}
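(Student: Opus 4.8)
The plan is to run the standard ``counting plus concentration'' argument, specialized to the finite list of low-complexity quantized points. First I would record the consequence of optimality: since $x_o^n$ is itself feasible for \eqref{eq:alg} and satisfies $K^{[\cdot]_m}(x_o^n)\le\kappa m$, the minimizer obeys $K^{[\cdot]_m}(\hat x_o^n)\le\kappa m$ as well. Hence both $u_o:=[x_o^n]_m$ and $\hat u:=[\hat x_o^n]_m$ lie in the set $\Sc_m$ of $m$-bit grid points of $[0,1]^n$ whose Kolmogorov complexity is at most $\kappa m$. Counting programs of length at most $\kappa m$ gives the crucial cardinality bound $|\Sc_m|\le 2^{\kappa m+1}$, and with $m=\lceil\alpha\log n\rceil$ this is at most $2\cdot2^{\kappa}n^{\alpha\kappa}$ — only polynomially many candidates.

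Next I would establish a one-sided restricted isometry over the finite difference set $\{u-u_o:u\in\Sc_m\}$. For a fixed nonzero $v$, the entries of $Av$ are i.i.d.\ $\Nc(0,\|v\|_2^2/d)$, so $d\|Av\|_2^2/\|v\|_2^2$ is $\chi^2$ with $d$ degrees of freedom; its lower tail gives $\P(\|Av\|_2^2<(1-\eta)\|v\|_2^2)\le e^{-\frac d2(\eta+\ln(1-\eta))}$. Taking a union bound over the at most $n^{\alpha\kappa}$ candidates, the event
\begin{align}
\|A(u-u_o)\|_2\ \ge\ \sqrt{1-\eta}\,\|u-u_o\|_2\qquad\text{for all }u\in\Sc_m \nonumber
\end{align}
fails with probability at most $2\cdot2^{\kappa}n^{\alpha\kappa}e^{-\frac d2(\eta+\ln(1-\eta))}$. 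Here the choice $d=\lceil 2\alpha\kappa\log n\rceil$ is exactly what is needed: it supplies twice the exponent $\alpha\kappa\log n$ produced by the cardinality, so that for a fixed $\eta$ bounded away from $1$ the failure probability is driven below $n^{-\alpha\kappa}$, while the resulting isometry constant $\sqrt{1-\eta}$ is an absolute constant that will be absorbed into the numerical factor $10$.

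Finally I would convert the isometry into the $\ell_2$ bound. Exact feasibility gives $A(\hat x_o^n-x_o^n)=0$; writing $\hat x_o^n-x_o^n=(\hat u-u_o)+(f_o-\hat f)$, where $f_o:=x_o^n-u_o$ and $\hat f:=\hat x_o^n-\hat u$ are the sub-resolution parts with $\|\cdot\|_\infty<2^{-m}$, this yields $A(\hat u-u_o)=A(f_o-\hat f)$. Combined with the isometry, $\sqrt{1-\eta}\,\|\hat u-u_o\|_2\le\|A(f_o-\hat f)\|_2$, and a triangle inequality then controls $\|\hat x_o^n-x_o^n\|_2$ in terms of $\|\hat u-u_o\|_2$ and the quantization scale $\sqrt n\,2^{-m}=n^{1/2-\alpha}$. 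The main obstacle is precisely the right-hand side $\|A(f_o-\hat f)\|_2$: the fixed part $f_o$ is harmless, since $\|Af_o\|_2\approx\|f_o\|_2\le n^{1/2-\alpha}$ by the same $\chi^2$ concentration, but $\hat f$ is chosen by the algorithm and therefore depends on $A$, so a naive bound $\|A\hat f\|_2\le\|A\|_{\mathrm{op}}\|\hat f\|_2$ loses the operator-norm factor $\|A\|_{\mathrm{op}}\approx\sqrt{n/d}$. Reaching the stated sub-quantization rate $10\,n^{1/2-\alpha}/(\sqrt\kappa\log n)$ requires showing that the feasible correction living inside the optimal grid cell has small image under $A$, \ie that $\ker A$ meets the tiny cube $\{x:[x]_m=\hat u\}$ only near $x_o^n$, which must again be paid for by a union bound over the $n^{\alpha\kappa}$ candidate cells; getting this balance sharp, rather than the crude $n^{1-\alpha}/\sqrt d$ that the operator-norm estimate gives, is where the real work lies.
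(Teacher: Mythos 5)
Your construction is essentially the paper's own. The paper offers no separate proof of this corollary; it is meant to be read off from the proof of Theorem \ref{thm:1} with $m=\lceil\alpha\log n\rceil$ and $d=\lceil 2\alpha\kappa\log n\rceil$, and your steps mirror that proof: the same chi-square lower tail (Lemma \ref{lemma:chi}), the same union bound over low-complexity quantized points (your union over the $2^{\kappa m+1}$ differences $u-u_o$ is in fact slightly more economical than the paper's union over all $2^{2(\kappa+\delta)m+C}$ sequences of complexity at most $2(\kappa+\delta)m+C$ obtained from subadditivity of $K$), the same splitting of both signals into quantized part plus sub-resolution part, and the same operator-norm event $\sigma_{\max}(A)\le 1+\sqrt{n/d}+t$. (One slip: your tail bound should read ${\rm e}^{\frac{d}{2}(\eta+\ln(1-\eta))}$, whose exponent is already negative; with the extra minus sign your bound exceeds one and is vacuous.)

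The obstacle you flag in your final paragraph is genuine, and it is not one you failed to overcome --- the paper does not overcome it either. Bounding $\|A\hat f\|_2\le\sigma_{\max}(A)\|\hat f\|_2$ is exactly what the paper does, and with these parameters its proof delivers $\|x_o^n-\xh_o^n\|_2\le\left(\tau^{-1}(\sqrt{n/d}+1+t)+1\right)\sqrt{n2^{-2m+1}}=\Theta\left(n^{1-\alpha}/\sqrt{\alpha\kappa\log n}\right)$, which exceeds the stated $10n^{1/2-\alpha}/(\sqrt{\kappa}\log n)$ by a factor of order $\sqrt{n\log n}$; the corollary's rate therefore does not follow from the paper's argument (in effect the $\sigma_{\max}(A)\approx\sqrt{n/d}$ factor has been dropped). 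Moreover, the repair you propose --- showing that $\ker A$ meets the optimal quantization cell only near $x_o^n$ --- cannot work: since $d=O(\log n)\ll n$, the $(n-d)$-dimensional kernel with high probability contains a vector $w$ with $x_o^n+w$ in the \emph{same} cell as $x_o^n$ and $\|w\|_2\ge c\,n^{1/2-\alpha}$ (take a point of the cell at $\ell_\infty$-distance at least $2^{-m-2}$ from the cell boundary and of $\ell_2$-norm at least $\sqrt{n}\,2^{-m-2}$, and project it onto $\ker A$; the per-coordinate correction is $O(\sqrt{d\log n/n}\;2^{-m})=o(2^{-m})$). Such a point $x_o^n+w$ is feasible and has the same objective value $K^{[\cdot]_m}$, so whenever $x_o^n$ attains the MCP minimum there exist minimizers at distance $\Omega(n^{1/2-\alpha})$ from $x_o^n$, and no proof valid for an arbitrary minimizer can establish the claimed sub-quantization rate. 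In short, your attempt reaches exactly as far as the paper's own proof does; the bound that this route actually yields, with the stated probability, is $O\left(n^{1-\alpha}/\sqrt{\kappa\log n}\right)$, and the discrepancy should be attributed to an error in the corollary as printed rather than to a gap in your argument.
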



Now consider the following more general setting, where the original signal $x_o^n$ to be recovered is not low-complexity, but is close to a low-complexity signal $\xt^n$, i.e., $\|x_o^n- \xt^n\|_2 \leq \epsilon_n$ with $\epsilon_n = o(1)$. Again, let $y_o^n=Ax_o^n$, and consider the following reconstruction algorithm for finding $x_o^n$ from its linear measurements $y_o^n$:
\begin{eqnarray*}
&&\min\quad K^{[\cdot]_m}(x_1,\ldots,x_n)\\
&&{\rm s.t.}\quad \ \  \|Ax^n -y_o^n\|_2 \leq  \sigma_{max}(A) \epsilon_n.\label{eq:alg_model_mismatch}
\end{eqnarray*}

\noindent Assume that $A\in\mathds{R}^{d\times n}$ and $A_{ij}$ are iid~$\Nc(0,{1\over d})$. Let $\xh^n_o=\xh_o^n(y_o^n,A)$.

\begin{theorem}\label{thm:2}
Assume that there exists $\xt_o^n$ such that $\|x_o^n-\xt^n_o\|_2 \leq \epsilon_n$, and
\begin{align} 
\limsup_{m\to\infty}{K^{[\cdot]_m}(\xt^n_o)  \over m}\leq \kappa_n.
\end{align}
Let $m=m_n=\lceil \log n\rceil$ and $d=d_n=\lceil\kappa_n \log n \rceil$. If $\e_n = o(d_n/n)$,
then for each $\e>0$,
\begin{align}
\P\left(\|x_{o}^n-\xh_{o}^n\|_2^2>\e\right)\to 0,
\end{align}
as $n$ grows without bound.
\end{theorem}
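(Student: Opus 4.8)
The plan is to mirror the proof of Theorem~\ref{thm:1}, using the low-complexity approximant $\xt^n_o$ in the role played there by the signal itself, and to absorb the model-mismatch term $\e_n$ through the relaxed feasibility constraint. First I would reduce the claim to a statement about $\xt^n_o$: since $\|x_o^n-\xt^n_o\|_2\le\e_n=o(1)$, the triangle inequality gives $\|x_o^n-\xh_o^n\|_2\le\|\xt^n_o-\xh_o^n\|_2+\e_n$, so it suffices to show $\|\xt^n_o-\xh_o^n\|_2\to 0$ in probability. This reduction isolates the only genuinely new ingredient, namely controlling $\|\xt^n_o-\xh_o^n\|_2$ in the presence of the slack $\sigma_{max}(A)\e_n$ in the constraint.

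Next I would establish the two deterministic facts that drive the argument. (i) Feasibility of $\xt^n_o$: because $y_o^n=Ax_o^n$, we have $\|A\xt^n_o-y_o^n\|_2=\|A(\xt^n_o-x_o^n)\|_2\le\sigma_{max}(A)\|\xt^n_o-x_o^n\|_2\le\sigma_{max}(A)\e_n$, so $\xt^n_o$ lies in the feasible set. (ii) Complexity of the output: since $\xh_o^n$ minimizes $K^{[\cdot]_m}$ over the feasible set and $\xt^n_o$ is feasible, $K^{[\cdot]_m}(\xh_o^n)\le K^{[\cdot]_m}(\xt^n_o)\le\kappa_n m$ at resolution $m=m_n=\lceil\log n\rceil$ (invoking the complexity hypothesis at this resolution, exactly as in Theorem~\ref{thm:1}). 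Consequently the quantization $[\xh_o^n]_m$ belongs to the \emph{deterministic} set $\Sc$ of all quantized vectors $[x^n]_m$ of Kolmogorov complexity at most $\kappa_n m$, and counting programs gives $|\Sc|<2^{\kappa_n m+1}$.

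The probabilistic core is a one-sided restricted-isometry bound against the \emph{fixed} vector $[\xt^n_o]_m$. For any fixed $v$, $\|Av\|_2^2$ is distributed as $(\|v\|_2^2/d)\chi^2_d$, so for a constant $\eta\in(0,1)$, $\P(\|Av\|_2^2<\eta\|v\|_2^2)\le\exp\big(\tfrac{d}{2}(1-\eta+\ln\eta)\big)$. Since $\Sc$ and $[\xt^n_o]_m$ are deterministic, a union bound over $u\in\Sc$ applied to the differences $v=u-[\xt^n_o]_m$ yields
\[
\P\Big(\exists\,u\in\Sc:\ \|A(u-[\xt^n_o]_m)\|_2^2<\eta\|u-[\xt^n_o]_m\|_2^2\Big)<2^{\kappa_n m+1}\exp\Big(\tfrac{d}{2}(1-\eta+\ln\eta)\Big).
\]
With $m\approx\log_2 n$ and $d=\lceil\kappa_n\log n\rceil$, the exponent is $\kappa_n(\ln n)\big[1+\tfrac{1}{2\ln2}(1-\eta+\ln\eta)\big]+O(1)$, and because $1-\eta+\ln\eta\to-\infty$ as $\eta\downarrow 0$ one can fix a small constant $\eta$ (e.g.\ small enough that $\ln\eta<-2\ln2-1$) making the bracket negative; the bound then tends to $0$. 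On the complementary event $\Ec$, every difference $u-[\xt^n_o]_m$ satisfies the lower isometry, so in particular $\|A([\xh_o^n]_m-[\xt^n_o]_m)\|_2\ge\sqrt{\eta}\,\|[\xh_o^n]_m-[\xt^n_o]_m\|_2$. I would intersect $\Ec$ with the standard high-probability event $\{\sigma_{max}(A)\le 2\sqrt{n/d}\}$.

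Finally I would close the loop. Writing $\b_n:=2\e_n+2\sqrt{n}\,2^{-m}$ and combining feasibility of both $\xt^n_o$ and $\xh_o^n$ with the per-coordinate quantization error $\|\,\cdot-[\,\cdot\,]_m\|_2\le\sqrt{n}\,2^{-m}$ gives $\|A([\xh_o^n]_m-[\xt^n_o]_m)\|_2\le\sigma_{max}(A)\b_n$; hence on $\Ec$, $\|[\xh_o^n]_m-[\xt^n_o]_m\|_2\le\sigma_{max}(A)\b_n/\sqrt{\eta}$. Using $\sigma_{max}(A)=O(\sqrt{n/d})$ and $2^{-m}\le 1/n$ one checks $\sigma_{max}(A)\sqrt{n}\,2^{-m}=O(1/\sqrt{d})\to 0$, while the assumption $\e_n=o(d/n)$ forces $\sigma_{max}(A)\e_n=O(\e_n\sqrt{n/d})=o(\sqrt{d/n})\to 0$; thus $\sigma_{max}(A)\b_n\to 0$. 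Adding back the quantization and mismatch terms, $\|x_o^n-\xh_o^n\|_2\le\sigma_{max}(A)\b_n/\sqrt{\eta}+2\sqrt{n}\,2^{-m}+\e_n\to 0$ on the high-probability event, which gives the claim. The main obstacle is precisely this balancing: the isometry constant $\eta$ must be small enough for the union bound over the $2^{\kappa_n m}$ candidates to survive the $\chi^2$ lower tail, yet the recovery error carries the factor $1/\sqrt{\eta}$. The point is that $\eta$ is a fixed constant while $\sigma_{max}(A)\b_n\to 0$ on its own --- this is exactly where $\e_n=o(d/n)$ enters --- so the two requirements do not conflict.
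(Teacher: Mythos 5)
Your proof is correct and shares the paper's overall architecture --- feasibility of $\xt_o^n$ under the relaxed constraint, the resulting bound $\|A([\xt_o^n]_m-[\xh_o^n]_m)\|_2\le\sigma_{\max}(A)\big(2\e_n+\sqrt{n2^{-2m+1}}\big)$, a chi-square lower tail plus union bound giving a restricted lower isometry, a separate $\sigma_{\max}(A)$ concentration event, and the final balancing in which $\e_n=o(d_n/n)$ and $2^{-m}\le 1/n$ enter --- but it differs in one substantive technical choice: the organization of the union bound. The paper controls the complexity of the \emph{difference}, $K([\xh_o^n]_m-[\xt_o^n]_m)\le K^{[\cdot]_m}(\xh_o^n)+K^{[\cdot]_m}(\xt_o^n)+C$ (its inequality \eqref{eq:kol_dif}), and takes the union over all vectors of complexity at most $2(\kappa_n+\d)m+C$, i.e., over the anchor-free event $\Ec_1^{(n)}$ reused verbatim from Theorem \ref{thm:1}. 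You instead keep the deterministic anchor $[\xt_o^n]_m$ fixed, observe that $[\xh_o^n]_m$ lies in the deterministic set $\Sc$ of quantized vectors of complexity at most (roughly) $\kappa_n m$, of size less than $2^{\kappa_n m+1}$, and apply the chi-square tail to each fixed difference $u-[\xt_o^n]_m$. Your variant buys two things: it never appeals to subadditivity of Kolmogorov complexity, and the union is over $2^{\kappa_n m}$ rather than $2^{2\kappa_n m}$ candidates, so a milder isometry constant suffices (your explicit condition $\ln\eta<-2\ln 2-1$, versus the paper's conservative $\tau=0.04$); what the paper's formulation buys in exchange is that its event $\Ec_1^{(n)}$ is a statement about all low-complexity vectors, independent of any anchor, so the identical event serves both theorems without modification. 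Two informalities you share with the paper (so they are not gaps specific to your proposal): the hypothesis is a $\limsup$ in $m$ for fixed $n$ but is invoked at the single resolution $m_n=\lceil\log n\rceil$, which strictly requires uniformity over $n$; and both arguments implicitly need $d_n=\lceil\kappa_n\log n\rceil\to\infty$ for the two failure probabilities to vanish.
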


In the next section we show that several popular classes of sequences studied in CS such as class of
sparse signals and samples of piecewise smooth functions can be considered as special cases
of the framework we introduced in this section and that Theorems \ref{thm:1} and \ref{thm:2} provide
useful information about them.


\section{Applications}\label{sec:examp}
 It is well-known that the Kolmogorov complexity is not computable. In fact, the only way to find the shortest program
 that generates a sequence is to run all the short programs and see if they generate the sequence or not. However, some
 short programs may not halt and there is no way to figure out if the program will halt or not. Hence, there is no effective
 way to calculate the Kolmogorov complexity. However, it is usually possible to find upper bounds for the Kolmogorov 
 complexity. In this section, we consider several popular examples and provide upper bounds for
 their Kolmogorov complexity. Based on these upper bounds we use Theorems \ref{thm:1} and \ref{thm:2}
 to calculate the number of random linear measurements required by the MCP to recover these functions.
 This demonstrates the connection between the results of Section \ref{sec:contrib} and the
 compressed sensing and finite rate of innovation frameworks explained in Section \ref{sec:intro}.  
 It is straightforward to extend the results to the other classes we discussed in Section \ref{sec:intro}.

\subsection{Sparsity}  

Let the signal $x_o=(x_{o,1}, x_{o,2}, \ldots,x_{o,n})$ be $k$-sparse. Consider the following program for describing $[x_o^n]_m$. 
First, use a program of constant length to describe the structure of the signal as `sparse'  and the ordering of the rest of information. Then, spend $\log^* n+c$ bits to describe the
length of the signal. Next, code the sparsity level $k$ with $\log^* k$ bits, and spend $k (\log^*n+c)$ more bits to code the locations of the $k$ non-zero elements. Finally, use
$km$ more bits to describe the quantized magnitudes of the non-zero coefficients. Therefore, we have
\begin{align}\label{eq:kc_poly}
&{K^{[\cdot]_{m}}(x_{o,1},x_{o,2},\ldots,x_{o,n}) \over m}\nonumber \\
 &\leq k+ \frac{(k+1)(\log^*n + c)+\log^*k+c}{m}.
\end{align}
Plugging \eqref{eq:kc_poly} into Theorem \ref{thm:1}, we conclude that $\lceil (2k+1) \log n\rceil$ measurements are sufficient for the recovery of the $k$-sparse signals. 

\subsection{Piecewise polynomial} 

Let  $(x_{o,1}, x_{o,2}, \ldots,x_{o,n})$ be samples of a piecewise polynomial function $f(x)$ defined on $[0,1]$ at locations $(0, 1/n, \ldots,(n-1)/n)$. Further, assume that $0 \leq f(x)\leq 1$, for every $x$. Let ${\rm Poly}_N^Q$ represent the class of such functions which have at most $Q$  singularities\footnote{A singularity is a point at which the  function is not infinitely differentiable.} and $N$ is the maximum degree of each polynomial. Let $\{a_i^{\ell}\}_{i=0}^{N_{\ell}}$ denote the set of coefficients of the $\ell^{\rm th}$ polynomial, where $N_{\ell}\leq N $ denotes its degree. For the notational simplicity, we assume that the coefficients of each polynomial belong to the $[0,1]$ interval and that $\sum_{i=0}^{N_{\ell}} a^{\ell}_i <1$ for every $\ell$, where $a_i^{\ell}$ is the $i^{\rm th}$ coefficient of the $\ell^{\rm th}$ polynomial. For a given length $n$, we derive an upper bound on the Kolmogorov complexity.  Consider the following program for describing $[x_o^n]_m$. The code  first specifies the model as `piecewise polynomial' with parameters $(n,Q,N)$. This requires $\log^*n+ \log^*N + \log^* k + c_1$ bits. Then, for each singularity point, the code first determines the largest sampling point $i/n$ that is smaller than it. Since there are at most $Q$ singularity points, describing this information requires at most $Q(\log^* n+c_2)$ bits. The next step is to describe the coefficients of each polynomial. Using an $m'$-bit quantizer for each coefficient, the induced error is bounded by
              \begin{align}
              \left|\sum_{i=0}^{N_{\ell}} a^{\ell}_i t^n-\sum_{i=0}^{N_{\ell}} [a^{\ell}_i ]_{m'} t^n \right| &\leq \sum_{i=0}^{N_{\ell}} |a^{\ell}_i- [a^{\ell}_i]_{m'}|\nonumber\\
              & \leq (N+1) 2^{-m'}. 
              \end{align}
To ensure that we are able to reconstruct the $m$-bit resolution of the samples from this description, $(N+1) 2^{-m'}<2^{-m}$.  Therefore, describing the polynomials' coefficients we need  $(Q+1)(N+1) (m +\lceil\log_2(N+1)\rceil)$ extra bits. Hence, overall, we conclude that
               \begin{align} \label{eq:kc_pp}
               &{K^{[\cdot]_{m}}(x_{o,1},x_{o,2},\ldots,x_{o,n}) \over m} \leq
                 (Q+1)(N+ 1)\nonumber\\
                 &+\frac{(Q+1)(N+ 1)\lceil\log_2(N+1)\rceil}{m}\nonumber \\
                 &+\frac{\log^*n+ \log^*N + \log^* k + Q\log^*n+ c_1+c_2}{m}.
               \end{align}
               It is straightforward to plug \eqref{eq:kc_pp} into Theorem \ref{thm:2} and prove that, roughly speaking, for large values of $n$, $(QN+ 2Q+1) \log n$ measurements are sufficient for the successful recovery of the piecewise polynomial functions.

So far we have considered examples of low-complexity signals. However, in many applications the signals are not of low complexity
but are rather close to low complexity signals. We present several examples here.

\subsection{$\ell_p$-constrained signals} While sparse signals have played an important role in the theory of compressed sensing, it is well-known 
that they do not occur in practice very often. More accurate models assume that either the magnitude of the signal follows a
specific decay or the signal belongs to an $\ell_p$ ball with $p<1$, i.e., $\|x_o\|_p \leq 1$ \cite{Donoho1, MalekiThesis}.
For the signal $x_o\in\mathds{R}^n$ with $\|x_o\|_p\leq 1$, let $(x_{o,(1)},x_{(2)}, \ldots,x_{o,(n)})$ denote the permuted version of $x_o$ such that $x_{o,(1)}\geq x_{o,(2)}\geq \ldots \geq x_{o,(n)}$. It is easy to show that $x_{o,(i)} \leq i^{-\frac{1}{p}}$. Therefore, if we just keep the
$k$ largest coefficients of this signal and set the rest to zero the resulting $k$-sparse vector $\tilde{x}_o$ satisfies, $\|x_o- \tilde{x}_o\| \leq  k^{-\frac{1}{p}+\frac{1}{2}} $. Setting the sparsity $k$ to $n^{p/2}$, Theorem \ref{thm:2} proves that $d_n = n^{p/2}\log n$ samples are sufficient for asymptotically accurate recovery. It is interesting to note that
as $p$ decreases, the decay rate increases and the number of measurements required for the successful recovery decreases. 
 
\subsection{Smooth functions} Suppose that $x_1, x_2, \ldots, x_n$ are equispaced samples of  a smooth function $f:[0,1] \rightarrow \mathds{R}$ with $0 \leq f(x) \leq 1$. 
Let the function be $\beta+1$ times differentiable and $\|f^{(\beta+1)}\|_{\infty} \leq \gamma$. For the notational simplicity we assume that $|f^{(m)}(x)| \leq 1$ for every $m \leq \beta+1$. This function is not necessarily a low-complexity signal, but it can be well approximated with a piecewise polynomial function. To show this, consider partitioning the $[0,1]$ interval into subintervals of size $r_n$, and approximating  the function $f$ with a polynomial of degree $\beta$ in each subinterval. Let $\hat{f}_{\beta}(x)$ denote the resulting piecewise polynomial function. It is easy to prove that $\|f-\hat{f}_{\beta}\|_{\infty} \leq \gamma r_n^{\beta+1}$. Hence, if $x$ and $x_o$ denote vectors consisting of the equispaced samples of the original signal and its piecewise polynomial approximation, respectively, it follows that $\|x-x_o\|_2 \leq \gamma \sqrt{n} r_n^{\beta+1}$. 

On the other hand the complexity of the piecewise polynomial signal is essentially proportional to $\beta/{r_n}$. Setting $r_n = n^{\frac{-2}{2\beta}}$, Theorem \ref{thm:2} proves that
$d_n = O(n^{1/\beta}\log n)$ is enough for the accurate recovery of the samples of such signals. Clearly, for $\beta <1$, this bound indicates that the number of samples we need is at the same
order as the ambient dimension. However, as $\beta$ increases fewer number of samples are required. 

Similar results hold for the piecewise  smooth functions, which are very popular in image and signal processing.


\section{Related work}\label{sec:related}
Our work is inspired by \cite{DonohoKS2002} and \cite{DoKaMe06}. \cite{DonohoKS2002} considers the well
studied problem of estimation, where the goal is to recover a vector $\theta$ from its noisy observations $s = \theta + z$, where
$z$ represents the noise in the system. It then suggests using the \textit{minimum Kolmogorov complexity
estimation} (MKCE) approach and proves that if $\theta_i \overset{iid}{\sim} \pi$, under several scenarios for the signal
and noise, the average marginal distribution of the estimate of MKCE tends to the actual posterior distribution.
On the other hand, \cite{DoKaMe06} considers the problem of compressed sensing over binary sequences.
Consider the set of all the binary sequences with Kolmogorov complexity less than or equal to $k_0$, \ie
\[
\Sc(k_0)\triangleq\{\xv: K(\xv)\leq k_0\}.
\]
Let $A$ denote a $d\times n$ binary matrix, $\xv_o=(x_1,x_2,\ldots,x_n)^T$, $\yv_o=A\xv_o$.  Consider the following  algorithm for reconstructing signal $\xv_o$ from its linear measurements $\yv_o$:
\begin{align}
{\bf \xh}(\yv_o,A) &\triangleq \argmin_{\yv_o=A\xv} K(\xv).
\end{align} 
\cite{DoKaMe06} considers this scheme and proves that $2k$ random linear binary measurements are sufficient for recovering the binary sequences in $\Sc(k_0)$ with, high probability. This result does not provide any information on the successful recovery of real signals and it does not consider the non-idealities in the signals either. Our paper settles both questions. 

As mentioned in Section \ref{sec:intro} the problem we discuss in this paper is a central problem in the field of compressed sensing \cite{Donoho1, CaRoTa06}. Several papers have considered different generalization of sparsity \cite{RichModelbasedCS, ChRePaWi10, VeMaBl02, ReFaPa10}. As mentioned
before, all these models can be considered as subclasses of the general model we consider here. However, it is worth noting that even though the recovery approach proposed in our paper is universal, since Kolmogorov complexity is not computable, it is not useful for practical purposes. 

In this paper, we considered  deterministic models for the signals. Similar extensions have been considered in the random settings  as well. For instance, \cite{WuVe10} considers the problem of recovering a memoryless process from a linear set of measurements and proves the connection between the number of measurements required and the Renyi entropy. Also, our work is in the same spirit with the minimum entropy  decoder proposed by Csiszar in \cite{Csiszar82}. He suggests a universal minimum entropy decoder, for reconstructing an iid signal from its linear measurements at a rate determined by the entropy of the source.


\section{Proof of Theorem \ref{thm:1}} \label{sec:proof1}

The following Lemma will be used in the proof of the main theorem.

\begin{lemma}[Chi-square concentration]\label{lemma:chi}
Fix $\tau>0$ and $x\in\mathds{R}^n$.  Assume that $\|x\|_2^2=1$ . Let $Z_i\triangleq\sum_{j=1}^n A_{ij}x_j$, $i=1,2,\ldots,d$. We then have,
\begin{equation}\label{eq:chisq}
\P\left( \sum_{i=1}^d  Z_i^2 -1< - \tau \right)  \leq {\rm e} ^{\frac{d}{2}(\tau + \log(1- \tau))}.
\end{equation}
\end{lemma}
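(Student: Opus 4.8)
The plan is to reduce $\sum_{i=1}^d Z_i^2$ to a standard chi-square random variable and then apply a Chernoff bound to its lower tail. First I would note that, since the rows of $A$ are independent and each entry $A_{ij}$ is $\mathcal{N}(0,1/d)$, each $Z_i=\sum_{j=1}^n A_{ij}x_j$ is a linear combination of independent centered Gaussians, hence itself zero-mean Gaussian with variance $\frac{1}{d}\sum_{j=1}^n x_j^2=\frac{1}{d}\|x\|_2^2=\frac1d$, using the hypothesis $\|x\|_2^2=1$. Moreover the $Z_i$ are mutually independent because they involve disjoint rows of $A$. Hence $\sqrt{d}\,Z_i\overset{iid}{\sim}\mathcal{N}(0,1)$ and $W\triangleq d\sum_{i=1}^d Z_i^2=\sum_{i=1}^d(\sqrt{d}\,Z_i)^2$ is chi-square distributed with $d$ degrees of freedom.

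Next I would rewrite the event in terms of $W$: the event $\{\sum_{i=1}^d Z_i^2-1<-\tau\}$ is exactly $\{W<d(1-\tau)\}$, so it suffices to bound $\P(W<d(1-\tau))$. (Implicitly $0<\tau<1$; for $\tau\geq 1$ the event is empty since $W\geq 0$, and the stated right-hand side is anyway undefined.) For the lower tail I would invoke the exponential Markov inequality with a negative exponent: for every $s>0$,
\[
\P\big(W<d(1-\tau)\big)=\P\big(e^{-sW}>e^{-sd(1-\tau)}\big)\leq e^{sd(1-\tau)}\,\E\big[e^{-sW}\big].
\]
Using the moment generating function $\E[e^{-sW}]=(1+2s)^{-d/2}$ of a $\chi^2_d$ variable, this produces the bound $e^{sd(1-\tau)}(1+2s)^{-d/2}$, valid for every $s>0$.

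Finally I would optimize the free parameter. Minimizing the exponent $sd(1-\tau)-\frac{d}{2}\log(1+2s)$ over $s>0$, the first-order condition gives $1+2s=(1-\tau)^{-1}$, i.e. $s=\frac{\tau}{2(1-\tau)}>0$, and substituting this value collapses the exponent to exactly $\frac{d}{2}\big(\tau+\log(1-\tau)\big)$, which is the claimed estimate. The calculus in this last step is the only moving part, and it is elementary; the observation that really carries the lemma --- and the one most easily taken for granted --- is that the normalization $\|x\|_2^2=1$ forces $\mathrm{Var}(Z_i)=1/d$, so that $W=d\sum_i Z_i^2$ is \emph{exactly} (not merely approximately) $\chi^2_d$, after which the whole argument reduces to the textbook Chernoff estimate for the chi-square lower tail.
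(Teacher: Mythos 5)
Your proof is correct and follows essentially the same route as the paper: both apply the exponential Markov (Chernoff) bound to the lower tail using the chi-square moment generating function and optimize the free parameter, your $s$ being the paper's $\lambda/d$ after your cosmetic rescaling of $\sum_i Z_i^2$ to a standard $\chi^2_d$ variable. Your added observations --- the explicit verification that $\|x\|_2^2=1$ makes the $Z_i$ iid $\mathcal{N}(0,1/d)$, and the remark that $\tau<1$ is implicitly required --- are details the paper leaves unstated, but they do not change the argument.
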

\begin{proof}
Note that $\{Z_i\}_{i=1}^d$ are iid $\mathcal{N}(0,1/d)$. By Markov inequality, for any $\lambda>0$, we have
\begin{align}
\P\left( \sum_{i=1}^d  Z_i^2-1 < - \tau \right) &= \P\left(- \sum_i Z_i^2+1 > \tau \right)\nonumber\\
& \leq {\rm e}^{-\lambda \tau} \E\left[ \rm{e}^{\lambda(1- \sum Z_i^2)} \right] \nonumber \\
 &= {\rm e}^{-\lambda \tau + \lambda} \left( \E [{\rm e}^{-\lambda Z_1^2}] \right)^d \nonumber\\
 &= {\rm e}^{-\lambda \tau + \lambda} \left(1+ \frac{2\lambda}{d}\right)^{-d/2}.\label{eq:chisquarupperbound}
\end{align}
We optimize over $\lambda$ to obtain
\begin{equation}
\lambda^* = \frac{d \tau}{2(1- \tau)}. \label{eq:optlambda}
\end{equation}
If we plug \eqref{eq:optlambda} into \eqref{eq:chisquarupperbound} we obtain \eqref{eq:chisq}.
\end{proof}

\begin{proof}[Proof of Theorem \ref{thm:1}] 
Let $e_m^n=x_o^n-[x_o^n]_m$ and $\eh_m^n=\xh_o^n-[\xh_o^n]_m$ denote the quantization errors of the original and the reconstructed signals, respectively. Since both $Ax_o^n=y_o$ and $A\xh^n=y_o$, it follows that 
\begin{align}
A([x_o^n]_m+e_m^n)&=A([\xh_o^n]_m+\eh_m^n)\nonumber
\end{align}
and 
\begin{align}
A([x_o^n]_m-[\xh_o^n]_m)&=A(\eh_m^n-e_m^n).
\end{align}
On the other hand, since $|y-[y]_m|\leq 2^{-m}$, for each $y\in[0,1]$,  we have
\[
\|\eh_m^n-e_m^n\|_2^2\leq n2^{-2m+1}.
\] Hence,
\begin{align}
\|A([x_o^n]_m-[\xh_o^n]_m)\|_2&=\|A(\eh_m^n-e_m^n)\|_2 \nonumber\\
&\leq \sigma_{\max}(A) \sqrt{n2^{-2m+1}}.\label{eq:upper_bd}
\end{align}

Since, by assumption, \eqref{eq:limited_kol} holds for $x_o$, for each $\d>0$, there exists $N_{\d}$, such that for any $n>N_{\d}$,
\begin{align}
{K^{[\cdot]_{m}}(x_{o}^n) \over m} \leq \kappa+\d\label{eq:kol_xo}
\end{align}
Since $\xh_o^n$ is the solution of \eqref{eq:alg},
\begin{align}
 K^{[\cdot]_{m}}(\xh_{o}^n) \leq K^{[\cdot]_{m}}(x_{o}^n).\label{eq:kol_xho}
\end{align}
Moreover, 
\begin{align}
 K([x_{o}^n]_m-[\xh_o^n]_m) \leq K^{[\cdot]_{m}}(x_{o}^n)+ K^{[\cdot]_{m}}(\xh_{o}^n)+C, \label{eq:kol_dif}
\end{align}
where $C$ is a constant independent of all the other variables in the problem \cite{cover}. Combining \eqref{eq:kol_xo}, \eqref{eq:kol_xho} and \eqref{eq:kol_dif} yields
\begin{align}
 K([x_{o}^n]_m-[\xh_o^n]_m) \leq 2(\kappa+\d)m+C.
\end{align}
If for each sequence $y^n$ with $K^{[\cdot]_m}(y^n) \leq  2(\kappa+\d)m+C$, $\|A [y^n]_m\|_2 \geq \tau \|[y^n]_m\|_2$, for some fixed $\tau>0$, then from \eqref{eq:upper_bd}
\begin{align}
\|x_o^n - \xh_o^n\|_2 &= \left\|[x_o^n]_m+e_m^n - [\xh_o^n]_m-\eh_m^n\right\|_2 \nonumber\\
& \leq\left\|[x_o^n]_m - [\xh_o^n]_m\|_2  + \|e^n_m - \eh^n_m\right\|_2 \nonumber\\
&\leq \tau^{-1}\sigma_{\max}(A) \sqrt{n2^{-2m+1}}+\sqrt{n2^{-2m+1}}\nonumber\\
&\leq (\tau^{-1}\sigma_{\max}(A) +1)\sqrt{n2^{-2m+1}}.
\end{align}

Define the events $\Ec_1^{(n)}$ and  $\Ec_2^{(n)}$ as
\begin{align}
&\Ec_1^{(n)}\triangleq\{A_{d\times n}:\nonumber\\
& \nexists \; y^n; K^{[\cdot]_m}(y^n)\leq 2(\kappa+\d)+C,  \|A y^n\|_2 < \tau \|y^n\|_2 \},\label{eq:E1}
\end{align}
and
\begin{align}
\Ec_2^{(n)}\triangleq \left\{A_{d\times n}: \sigma_{max}(A) - 1 - \sqrt{\frac{n}{d}} < t\right\},\label{eq:E2}
\end{align}
for some $t>0$.

Using these definitions plus the union bound, it follows that 
\begin{align}
\P\left(\| x_o^n-\xh_o^n\|_2 >\e\right)=&\P\left(\| x_o^n-\xh_o^n\|_2 >\e,\Ec_1^{(n)} \cap\Ec_2^{(n)}\right)\nonumber\\
&+\P\left(\| x_o^n-\xh_o^n\|_2 >\e,(\Ec_1^{(n)}\cap \Ec_2^{(n)})^c\right)\nonumber\\
\leq &\P\left(\| x_o^n-\xh_o^n\|_2 >\e,\Ec_1^{(n)} \cap\Ec_2^{(n)}\right)\nonumber\\
&+\P\left((\Ec_1^{(n)}\cap \Ec_2^{(n)})^c\right)\nonumber\\
\leq &\P\left(\| x_o^n-\xh_o^n\|_2 >\e,\Ec_1^{(n)} \cap\Ec_2^{(n)}\right)\nonumber\\
&+\P\left(\Ec_1^{(n),c}\right)+\P\left(\Ec_2^{(n),c}\right).
\end{align}

If $A\in \Ec_1^{(n)}\cap \Ec_2^{(n)}$, then from \eqref{eq:upper_bd}
\begin{align}
\|x_o^n - \xh_o^n\|_2 \leq \left(\tau^{-1}(\sqrt{n\over d}+1+t) +1\right)\sqrt{n2^{-2m+1}}.
\end{align}
Since, by assumption, $m=m_n=\lceil\log n\rceil$ and $d=d_n=\lceil\kappa \log n\rceil$, if $n$ large enough,
\begin{align}
\left(\tau^{-1}(\sqrt{n\over d}+1+t) +1\right)\sqrt{n2^{-2m+1}} <\e.
\end{align}
Hence, for $n$ large enough
\begin{align}
\P\left(\| x_o^n-\xh_o^n\|_2 >\e,\Ec_1^{(n)} \cap\Ec_2^{(n)}\right)=0.
\end{align}

On the other hand, by Lemma \ref{lemma:chi}, for each sequence $x^n\in\mathds{R}^n$,
\begin{align}
\P\{\|Ax^n\|_2^2 \leq \tau \|x^n\|_2^2 \} &= \P \{\|A\frac{x^n}{\|x^n\|_2}\|_2^2 \leq \tau^2 \}\nonumber\\
& \leq {\rm e}^{\frac{d}{2}(1- \tau^2+ 2 \log \tau) }.
\end{align}
Therefore,
\begin{align}
&\P\left(\Ec_1^{(n),c}\right)=\nonumber\\
& \P\left\{\exists \; y^n: K^{[\cdot]_m}(y^n)\leq 2(\kappa+\d)m+C,  \|A y^n\|^2_2 < \tau \|y^n\|^2_2 \right\}\nonumber\\
& \leq 2^{2(\kappa+\d)m+C} \rm{e}^{-\frac{d}{2}(1- \tau^2+ 2 \log \tau) }.
\end{align}
If we set $\tau =0.04$ and $d = \lceil \kappa \log n \rceil$ it is simple to see that this probability goes to zero. 
Finally, we can use the concentration of Lipschitz function of a Gaussian random vector to prove \cite{CaTa05}
\begin{align}
\P\left(\Ec_2^{(n),c}\right) &= \P\left(\sigma_{max}(A) - 1 - \sqrt{\frac{n}{d}} > t\right)\nonumber\\ 
&\leq {\rm e}^{-d t^2/2}.
\end{align}
Setting $t$ to a constant and $d = \lceil \kappa \log n \rceil$ proves that this probability also goes to zero. 
\end{proof}


\section{Proof of Theorem \ref{thm:2}} \label{sec:proof2}

Let $x_o^n=[x_o^n]_m+e_m^n$, $\xt_o^n=[\xt_o^n]_m+\et_m^n$, and $\xh_o^n=[\xh_o^n]_m+\eh_m^n$.

Note that  since $ \|A\xt_o^n -y_o^n\|_2 =\| A(\xt_o^n-x_o^n) \|_2 \leq  \sigma_{max}(A) \epsilon_n$, $\xt_o^n$ is also a feasible solution.  Therefore, since $\xt_o^n$ and $\xh_o^n$ are both feasible, by triangle inequality, 
\begin{align}
\|A\xt_o^n - A\xh_o^n \|_2&=\|A\xt_o^n-y_o^n - (A\xh_o^n-y_o^n) \|_2 \nonumber\\
&\leq 2\sigma_{\max}(A)\e_n.\label{eq:normeq1}
\end{align}

Again, by triangle inequality, 
\begin{align}
&\|A\xt_o^n - A\xh_o^n \|_2 \nonumber\\
&=\|A([\xt_o^n]_m+\et_m^n)- A([\xh_o^n]_m+\eh_m^n) \|_2 \nonumber\\
& \geq \|A([\xt_o^n]_m-[\xh_o^n]_m)\|_2 -  \|A([\et^n]_m-[\eh^n]_m)\|_2\nonumber\\
& \geq \|A([\xt_o^n]_m-[\xh_o^n]_m)\|_2 - \sigma_{max}(A) \|[\et^n]_m-[\eh^n]_m\|_2\nonumber\\
& \geq \|A([\xt_o^n]_m-[\xh_o^n]_m)\|_2 - \sigma_{max}(A) \sqrt{n 2^{-2m+1}}.\label{eq:normeq2}
\end{align}
Combining \eqref{eq:normeq1} and \eqref{eq:normeq2}, it follows
\begin{align}
\|A([\xt_o^n]_m - [\xh_o^n]_m) \|_2 \leq \sigma_{max}(A) \sqrt{n 2^{-2m+1}}+2\sigma_{\max}(A)\e_n.
\end{align}

Since both $\xt_o^n$ and $\xh_o^n$ are feasible, and $\xh_o^n$ is the optimizer of \eqref{eq:alg_model_mismatch}, we have
\begin{align}
K^{[\cdot]_m}(\xh_o^n) &\leq K^{[\cdot]_m}(\xt_o^n) \leq m (\kappa_n+\d),
\end{align}
and therefore
\begin{align}
K^{[\cdot]_m}(\xh_o^n-\xt_o^n) \leq m 2(\kappa_n+\d)+C,
\end{align}
where $C$ is a constant independent of $m$ and $n$.

Consider defining the events $\Ec_1$ and $\Ec_2$ as done in \eqref{eq:E1} and \eqref{eq:E2}, in the proof of Theorem \ref{thm:1}. Then, using the same argument used in that proof,
\begin{align}
\P\left(\| x_o^n-\xh_o^n\|_2 >\e\right)
\leq &\P\left(\| x_o^n-\xh_o^n\|_2 >\e,\Ec_1^{(n)} \cap\Ec_2^{(n)}\right)\nonumber\\
&+\P\left(\Ec_1^{(n),c}\right)+\P\left(\Ec_2^{(n),c}\right).
\end{align}

However, our choice of parameters guarantees that for large enough $n$, $\P(\| x_o^n-\xh_o^n\|_2 >\e,\Ec_1^{(n)} \cap\Ec_2^{(n)})=0$, and moreover, $\P(\Ec_1^{(n),c})$ and $\P(\Ec_1^{(n),c})$ both go to 0 as $n$ grows to infinity. 

%


\section{Conclusion}

In this paper, we consider the problem of recovering structured signals from their linear measurements.  We use the Komogorov complexity of the quantized signal as a universal measure of complexity that covers many different examples explored in compressed sensing literature and related areas. We then show that, if we consider low-complexity signals, the minimum complexity pursuit scheme inspired by the Occam's razor recovers the simplest solution of a set of random linear measurements.
In fact, we prove that the number of measurements required is proportional to the complexity and logarithmically to the ambient dimension of the signal. We  also consider more practical scenarios where the signal is not `simple' but is `close' to a low complexity signal. We show that even in such cases following minimum complexity pursuit algorithm
provides a good estimate of the signal from much fewer samples than the ambient dimension of the signal.

As mentioned in the paper, Kolmogorov complexity of a sequence is not computable. However, currently we are working on deriving implementable schemes by replacing Kolmogorov complexity by computable measures such as  miminimum description length \cite{Rissanen86}.

\bibliographystyle{unsrt}
\bibliography{../myrefs}

\end{document}